\newtheorem{thm}{Theorem}
\newtheorem{cor}[thm]{Corollary}
\newtheorem{lem}[thm]{Lemma}
\theoremstyle{definition}
\newtheorem{defn}[thm]{Definition}
\theoremstyle{remark}
\newproof{pf}{Proof}
\newcommand{\R}{\mathbb{R}}
\renewcommand{\L}{\mathcal{L}}
\newcommand{\V}{\mathcal{V}}
\newcommand{\X}{\mathcal{X}}
\newcommand{\M}{\mathcal{M}}
\newcommand{\one}{{\bf 1}}
\renewcommand{\ker}{{\rm ker\,}}
\renewcommand{\span}{{\rm span\,}}
\newcommand{\rank}{{\rm rank\,}}
\newcommand{\diag}{{\rm diag\,}}
\journal{}
\begin{document}

\begin{frontmatter}

\title{Graph Fourier Transform Based on $\ell_1$ Norm Variation Minimization}

\author[sysu,gpkl]{Lihua Yang}
\ead{mcsylh@mail.sysu.edu.cn}
\author[sysu]{Anna Qi}
\ead{1350561656@qq.com}
\author[szu]{Chao Huang}
\ead{hchao@szu.edu.cn}
\author[sysu]{Jianfeng Huang\corref{cor}}
\ead{huangjf29@mail.sysu.edu.cn}
\address[sysu]{School of Mathematics, Sun Yat-sen University, Guangzhou 510275, China}
\address[gpkl]{
Guangdong Provincial Key Laboratory of Computational Science, Sun Yat-sen University, Guangzhou 510275, China}
\cortext[cor]{Corresponding author}
\address[szu]{College of Mathematics and Statistics, Shenzhen University, Shenzhen 518060, China}

\begin{abstract}
  The definition of the graph Fourier transform is a fundamental issue in graph signal processing.
  Conventional graph Fourier transform is defined through the eigenvectors of the graph Laplacian matrix, which minimize the $\ell_2$ norm signal variation.
  However, the computation of Laplacian eigenvectors is expensive when the graph is large.
  In this paper, we propose an alternative definition of graph Fourier transform based on the $\ell_1$ norm variation minimization.
  We obtain a necessary condition satisfied by the $\ell_1$ Fourier basis, and provide a fast greedy algorithm to approximate the $\ell_1$ Fourier basis.
  Numerical experiments show the effectiveness of the greedy algorithm.
  Moreover, the Fourier transform under the greedy basis demonstrates a similar rate of decay to that of Laplacian basis for simulated or real signals.
\end{abstract}

\begin{keyword}
graph signal processing \sep graph Fourier transform \sep signal variation \sep $\ell_1$ norm minimization
\end{keyword}

\end{frontmatter}


\section{Introduction}
\subsection{Graph Fourier transform}
In many applications such as social, transportation, sensor and neural networks, high-dimensional data is usually defined on the vertices of weighted graphs \cite{Shuman2013}.
To process signals on graphs, traditional theories and methods established on the Euclidean domain need to be extended to the graph setting.
There are many works in this area in recent years, including spectral graph theory \cite{Chung1997},
Fourier transform for directed graphs \cite{Sardellitti2017, Sandryhaila2013}, short-time Fourier transform on graphs \cite{Shuman2016}, wavelets on graphs \cite{Gavish2010, Hammond2011, XChen2014, Dong2017}, graph sampling theory \cite{Chen2015}, uncertainty principle \cite{Agaskar2013}, etc.

The definition of the graph Fourier transform plays a central role in graph signal processing.
By Fourier transform, a graph signal is decomposed into different spectral components and thus can be analyzed from the Fourier domain.
The popular definition of graph Fourier transform is through the eigenvectors of the graph Laplacian matrix.
Although this definition is adopted by many researchers, it has some limitations.
First, the definition only applies to undirected graphs.
Second, the computation of the Laplacian eigenvectors is rather expensive when the graph is large.
Therefore, it is tempting to find an alternative definition of graph Fourier transform without these disadvantages.

One basic requirement for the Fourier basis is that the basis vectors should represent a range of different oscillating frequencies.
For a time-domain signal, the classical Fourier transform decomposes it into different frequency components.
Likewise, in the graph setting, one expects the graph Fourier basis to have a similar property, i.e., the basis vectors represent different oscillating frequencies.
Generally speaking, the magnitude of oscillation of a signal can be measured by its variation.
In fact, the $\ell_2$ norm variation of the Laplacian eigenvectors $u_k$ is characterized by the corresponding eigenvalue $\lambda_k$.
When the eigenvalues $\lambda_k$ are arranged in ascending order, the variation of the eigenvector $u_k$ will be ascending with $k$, thus representing a range of frequencies from low to high.
Moreover, the eigenvector $u_k$ minimizes the $\ell_2$ norm variation in the subspace orthogonal to the span of the previous $k-1$ eigenvectors.

Recently, Sardellitti et al. proposed a definition of directed graph Fourier basis as the set of $N$ orthogonal vectors minimizing the graph directed variation, and proposed two algorithms (SOC and PAMAL) to solve the related optimization problem \cite{Sardellitti2017}.
However, there is a lack of theoretic analysis of the proposed Fourier basis, and the computational complexity of the proposed algorithms are rather high.
Slightly different from Sardellitti's approach, we propose a definition of $\ell_1$ Fourier basis based on iteratively solving a sequence of $\ell_1$ norm variation minimization problems.
We rigorously prove a necessary condition satisfied by the proposed $\ell_1$ Fourier basis.
Further, we provide a fast greedy algorithm to approximately construct the $\ell_1$ Fourier basis.
Numerical experiments show the algorithm is effective, and the Fourier coefficients under the greedy basis and Laplacian basis have nearly the same rate of decay for simulated or real signals.

The rest of the paper is organized as follows.
In Section 2, we discuss the relation between graph Fourier basis and signal variation, and propose the definition of $\ell_1$ Fourier basis based on $\ell_1$ norm variation minimization.
In Section 3, we prove a necessary condition of $\ell_1$ Fourier basis, showing that the $k$th basis vector $u_k$'s components have at most $k$ different values.
In Section 4, we provide a greedy algorithm to construct an approximate $\ell_1$ basis.
In Section 5, we present some numerical results.
Section 6 is a final conclusion.

\subsection{Notations}
In this paper we use the following notations.

For a matrix $M\in\R^{m\times n}$, $\span M$ denotes its column space, i.e., $\{Mx\mid x\in\R^n\}$; and $\ker M$ denotes its kernel, i.e., $\{x\in\R^n\mid Mx = 0\}$.

For a vector $x=[x_1,\dots,x_n]^\top\in\R^n$, $\|x\|$ denotes its Euclidean norm, i.e., $\|x\| = (\sum_{i=1}^n |x_i|^2)^{1/2}$.
For a matrix $M$, $\|M\|$ denotes its operator norm, i.e., $\sup_{x\ne 0}\frac{\|Mx\|}{\|x\|}$. Denote by $B(x,\varepsilon):=\{x'\mid\|x-x'\|<\varepsilon\}$ the open ball centered at $x$ with radius $\varepsilon>0$.

The cardinality of a set $A$ is denoted by $|A|$.
Let $N$ be a positive integer, and $\V = \{1,\dots,N\}$.
For any $A\subset \V$, we use $\one_A\in\R^N$ to denote the indication vector of $A$, i.e., $\one_A(i) = 1$ if $i\in A$ and $\one_A(i)=0$ otherwise.
$\one_\V$ is also written as $\one$.

For $W=[w_{ij}]\in\R^{N\times N}$ and subsets $A, B\subset \{1,\cdots, N\}$, $W(A,B)$ is defined as $\sum\limits_{i\in A}\sum\limits_{j\in B}w_{ij}$.

\section{Graph Fourier basis and signal variation}

In this section, we shall derive the relationship between the graph Fourier basis and signal variation.
Let us begin with the basic terminology of graph signal processing.
Let $G=(\V,W)$ be a connected, undirected, and weighted graph, where the vertices set $\V=\{1,2,\dots, N\}$ and the weight matrix $W=[w_{ij}] \in R^{N\times N}$ satisfying $w_{ij}=w_{ji}\ge 0$ and $w_{ii}=0$.
The degree of a vertex is defined as $d_i = \sum_{j=1}^N w_{ij}$, and the degree matrix $D=\diag(d_1,\dots,d_N)$.
The combinatorial Laplacian matrix is defined as $\L = D-W$.
Since $\L$ is symmetric and positive semi-definite, it has eigenvalues $0=\lambda_1\le \cdots\le \lambda_N$ and the corresponding set of orthonormal eigenvectors $\{u_1,\dots,u_N\}$.
We call $U=[u_1,\dots,u_N]\in\R^{N\times N}$ the Laplacian basis of $G$.
A graph signal $x$ is a real-valued function defined on $\V$, and can be regarded as a vector in $\R^N$.
The Fourier transform of $x$ under the Laplacian basis is defined as $U^\top x$.

Note that the $\ell_2$ norm variation of the Laplacian eigenvector $u_k$ is increasing with $k$.
To see this, let $x=[x_1,\dots,x_N]^\top\in\R^N$, then it can be proved that
\begin{equation}\label{eq_1}
    x^\top \L x = \sum_{1\le i < j \le N} w_{ij}|x_i - x_j|^2.
\end{equation}
That means the quadratic form $x^\top\L x$ exactly measures the $\ell_2$ norm variation of $x$.
Since $u_k^\top \L u_k=\lambda_k$, we have
\[
    u_1^\top \L u_1 \le \cdots \le u_N^\top \L u_N,
\]
i.e., the $\ell_2$ norm variation of $u_k$ is increasing with $k$.
In other words, the Laplacian basis vectors $\{u_k\mid 1\le k\le N\}$ represent a range of frequencies from low to high.

Furthermore, the eigenvector $u_k$ minimizes the $\ell_2$ norm variation in the subspace orthogonal to the span of the previous $k-1$ eigenvectors, i.e.,
\begin{equation}\label{eq_four2}
    \begin{array}{lcl}
     u_k = &\mathop{\arg\,\min}\limits_{x\in\R^N} & x^\top \L x\\
    &\text{s. t.}& [u_1,\dots, u_{k-1}]^\top x = 0, \ \|x\|=1.
    \end{array}
\end{equation}
In fact, let $x\in\R^N$ satisfy $[u_1,\dots, u_{k-1}]^\top x=0$ and $\|x\|=1$. Let the Fourier transform of $x$ be $\hat x=U^\top x=[\hat x_1,\dots,\hat x_N]^\top$.
Then $x$ can be expressed as $\sum_{j=k}^N \hat x_j u_j$, hence
\[x^\top \L x = \hat x^\top U^\top \L U \hat x = \sum_{j=k}^N \lambda_j|\hat x_j|^2 \ge \lambda_k \sum_{j=k}^N|\hat x_j|^2 = \lambda_k = u_k^\top \L u_k.\]
Therefore the eigenvector $u_k$ solves the $\ell_2$ norm variation minimization problem (\ref{eq_four2}) for $k=2,\dots, N$.

It is natural to consider the more general $\ell_p$ norm variation.
In this paper, we restrict ourselves to $\ell_1$ norm variation defined as follows
\begin{equation}
    S(x):= \sum_{1\le i < j\le N}w_{ij}|x_i-x_j|.
\end{equation}

Similar to Laplacian basis minimizing $\ell_2$ norm variation, we define the $\ell_1$ Fourier basis as the solution of $\ell_1$ norm variation minimization problem.
\begin{defn}\label{def_l1}
    Let $u_1:=\frac{\one}{\sqrt{N}}$. If a sequence of vectors $\{u_k\mid 2\le k\le N\}$ solves the $\ell_1$ norm variation minimization problem as follows,
    \begin{equation}\label{eq_four1}
        \begin{array}{lcl}
        u_k = & \mathop{\arg\,\min}\limits_{x\in\R^N} & S(x)\\
        & \text{s. t.}& [u_1,\dots, u_{k-1}]^\top x = 0, \ \|x\|=1.
        \end{array}
    \end{equation}
    for $k=2,\dots,N$,
    then we say the orthogonal matrix $U=[u_1,\dots,u_N]\in\R^{N\times N}$ constitutes an $\ell_1$ Fourier basis, or simply an $\ell_1$ basis, of the graph $G$.
\end{defn}

Remarks:
The above definition of $\ell_1$ Fourier basis can be extended to directed graphs. All one needs is to replace $S(x)$ in the minimization problem by a directed version
\begin{equation}
    \widetilde S(x):= \sum_{1\le i,j\le N}w_{ij}(x_i-x_j)_+,
\end{equation}
where $(x_i-x_j)_+ = \max (x_i-x_j, 0)$ (more details can be found in \cite{Sardellitti2017}).
Then one can similarly defined the directed $\ell_1$ Fourier basis as the solution of the corresponding problem.
Without loss of generality, we only consider undirected graphs in this paper.
Most results can be generated to the directed case without essential difficulties.

\section{Necessary condition of $\ell_1$ Fourier basis}
In the previous section, the $\ell_1$ Fourier basis vectors are defined as the solutions of a sequence of minimization problem (\ref{eq_four1}).
We rewrite problem (\ref{eq_four1}) in a concise form:
\begin{equation}
    P_U:=\quad
    \begin{array}{cl}
    \min\limits_{x\in\R^N} & S(x)\\
    \text{s. t.}& U^\top x = 0, \quad \|x\|=1
    \end{array}
\end{equation}
where $U\in\R^{N\times(k-1)}$ is a matrix with its first column being $\frac{\one}{\sqrt{N}}$, $\rank(U)=k-1$, and $2\le k\le N$.
With this notation, problem (\ref{eq_four1}) can be referred to as $P_{[u_1,\dots,u_{k-1}]}$.
Now our goal is to solve problem $P_U$.

First, let us recall some basic definitions of optimization theory.
Denote the feasible region of problem $P_U$ by $\X_U$, i.e.,
\begin{equation}
    \X_U:=\{x\in\R^N\mid U^\top x = 0,\ \|x\|=1\}.
\end{equation}
A point $x\in\X_U$ is called a local minimum of problem $P_U$ if there exists $\varepsilon>0$ such that $S(x')\ge S(x)$ for any $x'\in\X_U\cap B(x,\varepsilon)$.
If $S(x')\ge S(x)$ holds for any $x'\in\X_U$, then $x$ is called a global minimum of problem $P_U$.
Obviously a global minimum is necessarily a local minimum.
We denote the set of all local minima of problem $P_U$ by $\X^{**}_U$.

Due to the sphere constraint $\|x\|=1$, problem $P_U$ is not a convex optimization problem.
As far as we know, there are no general results about the global minimum of such problems, and in most cases it is only possible to approach the local minimum by iterative algorithms \cite{Bresson2012, Lai2014}.
As the main result of this section, we shall prove a necessary condition satisfied by the local minimum (Theorem \ref{thm_1}).
The key ingredient of the proof is based on the concept of piecewise representation, which is introduced as follows.

\begin{defn}
Suppose $x=[x_1,\dots, x_N]^\top\in\R^N$. Let $X:=\{x_i\mid 1\le i\le N\}$ and $m:=|X|$.
Then $X$ can be rewritten as $\{x_{(j)}\mid 1\le j\le m\}$, where $x_{(1)} < x_{(2)} < \cdots < x_{(m)}$.
Let $A_j:=\{i\mid 1\le i\le N,\ x_i=x_{(j)}\}$, $M:=[\one_{A_1},\dots,\one_{A_m}]\in\mathbb{R}^{N\times m}$ and $a:=[x_{(1)},\dots, x_{(m)}]^\top\in\R^m$.
Then $x=Ma$, which is called the \emph{piecewise representation} of $x$. We also call $M$ the \emph{partition matrix} of $x$, denoted by $\phi(x)=M$.
\end{defn}
\par
It is easy to see that any vector in $\R^N$ has unique piecewise representation.
Under the piecewise representation $x=Ma$, the $\ell_1$ norm variation $S(x)$ can be simplified to a linear form in a local neighborhood of $a$.

\begin{lem}\label{lem_f}
Suppose $x\in\R^N$, $\phi(x)=M=[\one_{A_1},\dots,\one_{A_m}]$, $x=Ma$ and $m\ge 2$.
Then there exists $\varepsilon>0$ such that
  \begin{equation}
    S(Ma') = f^\top a',\quad \forall a'\in B(a,\varepsilon),
  \end{equation}
where $f=[f_1,\cdots,f_m]^\top\in\R^m$ is defined by
\begin{equation}
    f_i:=\sum^{i-1}_{j=1}W(A_i,A_j)-\sum^m_{j=i+1}W(A_i,A_j), \quad i=1,\dots,m.
\end{equation}
\end{lem}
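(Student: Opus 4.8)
The plan is to compute $S(Ma')$ by grouping the defining double sum according to the blocks $A_1,\dots,A_m$ and then to linearize it by removing the absolute values on a small neighborhood of $a$. First I would note that $Ma'$ takes the constant value $a'_i$ on each block $A_i$, so whenever $k\in A_i$ and $l\in A_j$ the difference $(Ma')_k-(Ma')_l$ equals $a'_i-a'_j$. Collecting the terms of $S(Ma')=\sum_{1\le k<l\le N}w_{kl}|(Ma')_k-(Ma')_l|$ block by block, and using $w_{kl}=w_{lk}$ (which makes the within-block terms $i=j$ vanish and gives $W(A_i,A_j)=W(A_j,A_i)$), I would arrive at
\[
  S(Ma') = \sum_{1\le i<j\le m} W(A_i,A_j)\,|a'_i-a'_j|.
\]

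The decisive step is to discard the absolute values. Since the distinct values of $x$ satisfy the \emph{strict} inequalities $a_1<a_2<\cdots<a_m$ coming from the piecewise representation, this ordering is an open condition: choosing, say, $\varepsilon\le\frac12\min_{1\le i<m}(a_{i+1}-a_i)$, every $a'\in B(a,\varepsilon)$ still obeys $a'_1<\cdots<a'_m$. Hence each sign $\sgn(a'_i-a'_j)$ is locally constant, so $|a'_i-a'_j|=a'_j-a'_i$ for $i<j$ throughout $B(a,\varepsilon)$, and the expression above becomes linear in $a'$.

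It then remains only to collect the coefficient of each $a'_i$: it picks up $+W(A_j,A_i)$ from every pair $(j,i)$ with $j<i$ and $-W(A_i,A_j)$ from every pair $(i,j)$ with $j>i$, so by symmetry of $W$ this coefficient is exactly
\[
  \sum_{j=1}^{i-1}W(A_i,A_j)-\sum_{j=i+1}^m W(A_i,A_j) = f_i,
\]
yielding $S(Ma')=f^\top a'$ on $B(a,\varepsilon)$. I expect no analytic obstacle here; the whole argument hinges on the single observation that strict separation of the block values is preserved under small perturbations, which is what turns the piecewise-linear convex function $S$ into a genuinely linear function near $a$. The only place demanding care is the index bookkeeping that matches the collected coefficients to the stated formula for $f_i$.
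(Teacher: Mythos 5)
Your proof is correct and follows essentially the same route as the paper's: group the double sum over vertex pairs into block pairs to get $\sum_{i<j}W(A_i,A_j)|a'_i-a'_j|$, observe that the strict ordering $a_1<\cdots<a_m$ persists on a small ball so the absolute values resolve to $a'_j-a'_i$, and read off the coefficients $f_i$. Your explicit choice $\varepsilon\le\tfrac12\min_i(a_{i+1}-a_i)$ is a minor sharpening of the paper's ``sufficiently small'' but changes nothing of substance.
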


\begin{proof}
Suppose $a=[a_1,\dots,a_m]^\top$, then $a_1<\cdots<a_m$.
Let $a'=[a'_1,\dots,a'_m]^\top$ and $x'=Ma'$ .
When $\|a-a'\|$ is sufficiently small, we have $a'_1<\cdots<a'_m$, i.e., there exists $\varepsilon>0$ such that for all $a'\in B(a,\varepsilon)$, $x'=Ma'$ is a piecewise representation. Therefore
\begin{eqnarray*}
    S(x') &=& \frac{1}{2}\sum_{i=1}^N\sum_{j=1}^N w_{i,j}|x_i'-x_j'|
    \\&=& \frac{1}{2}\sum_{i=1}^m\sum_{p\in A_i}\sum_{j=1}^m\sum_{q\in A_j} w_{p,q}|x_p'-x_q'|
    \\&=& \frac{1}{2}\sum_{i=1}^m\sum_{j=1}^m|a_i'-a_j'|\sum_{p\in A_i}\sum_{q\in A_j}w_{p,q}
    \\&=& \sum_{1\le i<j\le m}(a_j'-a_i')W(A_i,A_j)
    \\&=& f^\top a'
\end{eqnarray*}
\end{proof}

\begin{thm}\label{thm_1}
If $x\in\X^{**}_U$ and $\phi(x)=M$, then
    \begin{equation}\label{eq_nec}
        \dim\ker (U^\top M) = 1.
    \end{equation}
\end{thm}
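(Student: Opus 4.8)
The plan is to establish the two inequalities $\dim\ker(U^\top M)\ge 1$ and $\dim\ker(U^\top M)\le 1$ separately. The lower bound is immediate: since $x\in\X_U$ we have $U^\top Ma=U^\top x=0$, so $a\in\ker(U^\top M)$, and $a\ne 0$ because $x=Ma$ has unit norm while $M$ has linearly independent columns (its columns $\one_{A_j}$ have disjoint supports). Before the main argument I would record two facts used throughout. First, $S(x)>0$: if $S(x)=0$ then $x$ is constant on every edge and hence, $G$ being connected, constant on $\V$, i.e.\ parallel to $\one$; but the first column of $U$ is $\one/\sqrt N$, so $U^\top x=0$ forces $x\perp\one$, and a constant vector orthogonal to $\one$ with $\|x\|=1$ cannot exist. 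Second, by Lemma~\ref{lem_f} evaluated at $a'=a$ we have $S(x)=S(Ma)=f^\top a$.

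For the upper bound I argue by contradiction, assuming $\dim\ker(U^\top M)\ge 2$. The key idea is to restrict attention to the block-constant subspace $V:=\{Mc\mid c\in\ker(U^\top M)\}\subseteq\R^N$. Since $M$ is injective, $\dim V=\dim\ker(U^\top M)\ge 2$; moreover $V\subseteq\ker U^\top$ because $U^\top(Mc)=(U^\top M)c=0$, and $x=Ma\in V$. As $\dim V\ge 2$, I can choose a unit vector $v\in V$ with $v\perp x$ and write $v=Mc$ with $c\in\ker(U^\top M)$. I then follow the great circle $x(t):=\cos t\,x+\sin t\,v=M(\cos t\,a+\sin t\,c)$, which stays on the unit sphere (as $x\perp v$ are both unit), lies in $V\subseteq\ker U^\top$, and therefore satisfies $x(t)\in\X_U$ for every $t$.

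The payoff comes from evaluating $S$ along this curve. For $|t|$ small the entries of $\cos t\,a+\sin t\,c$ remain strictly increasing (an open condition, since those of $a$ are), so $x(t)$ has the same partition matrix $M$ and Lemma~\ref{lem_f} applies, giving $S(x(t))=f^\top(\cos t\,a+\sin t\,c)=S(x)\cos t+(f^\top c)\sin t$. Because $x=x(0)$ is a local minimum of $S$ on $\X_U$ and the whole curve is feasible, $t=0$ must locally minimize the smooth scalar function $h(t):=S(x)\cos t+(f^\top c)\sin t$. Stationarity forces $h'(0)=f^\top c=0$, which leaves $h(t)=S(x)\cos t$ with $h''(0)=-S(x)<0$; thus $t=0$ is in fact a strict local \emph{maximum}, a contradiction. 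Hence $\dim\ker(U^\top M)\le 1$, and combined with the lower bound this gives (\ref{eq_nec}).

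I would flag the conceptual move — passing to the subspace $V$ and exploiting the curvature of the sphere — as the crux. A purely first-order stationarity analysis inside $V$ is inconclusive, since the linear part $f^\top c$ of the variation can vanish along the tangent direction; what actually rules out a second independent kernel direction is the negative second derivative $-S(x)$ contributed by the sphere, whose sign is pinned down precisely by the strict positivity $S(x)>0$. The only routine verifications are that $x(t)$ genuinely keeps the partition $M$ for small $t$ (so Lemma~\ref{lem_f} is applicable along the whole short arc) and the elementary feasibility and norm bookkeeping for $x(t)$.
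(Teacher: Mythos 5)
Your proof is correct and follows essentially the same route as the paper's: both reduce $S$ to the linear functional $f^\top a'$ via Lemma~\ref{lem_f}, assume $\dim\ker(U^\top M)\ge 2$ for contradiction, and strictly decrease the objective by moving along the feasible sphere in a kernel direction orthogonal to $x$, using the first-order (stationarity) condition to kill the linear term and $S(x)>0$ together with the sphere's curvature to force a strict decrease. The only cosmetic difference is that the paper obtains the tangential stationarity via Lagrange multipliers in the reduced coordinates $c'$ and realizes the perturbation as $(c+tr)/\sqrt{c''^\top Q c''}$, whereas you parametrize the very same curve as a great circle $\cos t\,x+\sin t\,v$ in $\R^N$.
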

\begin{proof}
The main idea is to transform problem $P_U$ to a easier one by using Lemma \ref{lem_f}.
Suppose $M=[\one_{A_1},\dots,\one_{A_m}]$ and $x=Ma$.
By assumption of problem $P_U$, we have $\langle x,\one\rangle=0$ and $\|x\|=1$, therefore $x$ is a non-constant signal, i.e., $m\ge 2$.
Since $x$ is a local minimum of $P_U$, there exists $\varepsilon_1 > 0$ such that
\begin{equation}
    \begin{array}{lcl}
    x\ = & \mathop{\rm arg\,min}\limits_{x'\in\R^N} & S(x')\\
    &\text{s. t.}& U^\top x' = 0, \ \|x'\| = 1,\ x'\in B(x,\varepsilon_1).
    \end{array}
\end{equation}

By Lemma \ref{lem_f}, there exists $\varepsilon_2>0$ and $f\in\R^m$ such that $S(x')=f^\top a'$ for all $a'\in B(a,\varepsilon_2)$ and $x'=Ma'$.
Let $\varepsilon:=\min\{\varepsilon_1/\|M\|,\varepsilon_2\}$, then $a'\in B(a,\varepsilon)$ implies $x'\in B(x,\varepsilon_1)$ and $S(x')=f^\top a'$.
Let $\Lambda:=\diag(|A_1|,\dots,|A_m|) = M^\top M$, then $a'^\top\Lambda a'=1$ implies $\|x'\|=1$, and $U^\top Ma'=0$ implies $U^\top x' = 0$.
Therefore
\begin{equation}\label{pa}
    \begin{array}{lcl}
    a\ = & \mathop{\rm arg\,min}\limits_{a'\in\R^m} & f^\top a'\\
    &\text{s. t.}& U^\top M a' = 0, \ a'^\top \Lambda a' = 1,\ a'\in B(a,\varepsilon).
    \end{array}
\end{equation}

Suppose $\dim\ker(U^\top M) = l$, and $V$ is an orthonormal basis of $\ker (U^\top M)$.
Define $c:=V^\top a$, $g^\top := f^\top V$, $Q := V^\top \Lambda V$.
Then we have
\begin{equation}\label{eq_Pc}
    \begin{array}{lcl}
    c\ = & \mathop{\rm arg\,min}\limits_{c'\in\R^l} & g^\top c'\\
    &\text{s. t.}& c'^\top Q c' = 1,\ c'\in B(c,\varepsilon).
    \end{array}
\end{equation}
We next prove problem (\ref{eq_Pc}) has minimum only if $l=1$.
It is proved by contradiction.

Suppose $l\ge 2$.
By the method of Lagrange multipliers, the minimum $c$ of problem (\ref{eq_Pc}) satisfies the equation \[
    \nabla[g^\top c + \mu(c^\top Q c - 1)] = g+2\mu Q c = 0,
\]
where $\mu$ is a Lagrange multiplier. Thus $g=-2\mu Qc$.

Since $l\ge 2$, there exists a nonzero vector $r'\in\R^l$ such that $r'^\top c = 0$.
Let $r := Q^{-1} r'$, then $g^\top r = -2\mu c^\top Q r = -2\mu c^\top r' = 0$.
Let $c'':=c+tr$, $t\in\R$, $t\ne 0$. Then
\[
    c''^\top Q c'' = c^\top Q c+ 2 t c^\top Q r + t^2 r^\top Qr = 1 + t^2 r^\top Q r > 1,
\]
since $Q$ is symmetric and positive definite.

Let $c':=c''/\sqrt{c''^\top Q c''}$, then $c'^\top Q c' = 1$. Choose $|t|$ small enough to guarantee $c'\in B(c,\varepsilon)$. Since $g^\top c = f^\top a = S(x) > 0$, we have
\[
    g^\top c' = \frac{g^\top c + t g^\top r}{\sqrt{c''^\top Q c''}} = \frac{g^\top c}{\sqrt{c''^\top Q c''}} < g^\top c,
\]
which contradicts to $c$ being the minimum of problem (\ref{eq_Pc}). The proof is complete.
\end{proof}

We remark that the condition (\ref{eq_nec}) is not a sufficient condition.
From condition (\ref{eq_nec}), we deduce an estimate of the number of values of the components of a local minimum $x$.
\begin{cor}\label{cor}
    If $x\in\X_U^{**}$, then the components of $x$ have at most $k$ different values.
\end{cor}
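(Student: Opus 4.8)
The plan is to derive the corollary directly from Theorem \ref{thm_1} by means of the rank–nullity theorem, so the real work has already been done in the theorem and essentially nothing new is needed.

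First I would translate the quantity to be bounded into the language of the piecewise representation. By definition of $\phi$, if $x$ has distinct component values $x_{(1)} < \cdots < x_{(m)}$, then the partition matrix $M = \phi(x) = [\one_{A_1},\dots,\one_{A_m}]$ has exactly $m$ columns, where $m$ is precisely the number of different values among the components of $x$. Thus the claim ``$x$ has at most $k$ different values'' is exactly the statement $m \le k$.

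Next I would invoke Theorem \ref{thm_1}: since $x \in \X_U^{**}$ and $\phi(x) = M$, we have $\dim\ker(U^\top M) = 1$. The key observation is then a dimension count on the matrix $U^\top M$. Recall that in problem $P_U$ the matrix $U$ lies in $\R^{N\times(k-1)}$, so $U^\top$ has $k-1$ rows and $M$ has $m$ columns, whence $U^\top M \in \R^{(k-1)\times m}$. Viewing $U^\top M$ as a linear map $\R^m \to \R^{k-1}$ and applying rank–nullity gives
\[
    \rank(U^\top M) + \dim\ker(U^\top M) = m.
\]
Since a $(k-1)\times m$ matrix has rank at most $k-1$, substituting $\dim\ker(U^\top M) = 1$ yields
\[
    m = \rank(U^\top M) + 1 \le (k-1) + 1 = k,
\]
which is the desired bound.

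I do not expect any genuine obstacle here, as the argument is a two‑line consequence of the theorem. The only points requiring a little care are bookkeeping rather than substance: correctly identifying $m$ (the number of columns of $\phi(x)$) with the number of distinct component values, and correctly tracking the shape $(k-1)\times m$ of $U^\top M$ so that the rank bound $\rank(U^\top M)\le k-1$ is applied with the right dimension. Once these are in place the conclusion is immediate.
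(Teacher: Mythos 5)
Your proposal is correct and follows essentially the same route as the paper: both identify $m$ (the number of columns of $\phi(x)$) with the number of distinct component values, apply Theorem \ref{thm_1} to get $\dim\ker(U^\top M)=1$, and conclude $m\le k$ via rank--nullity together with the bound $\rank(U^\top M)\le k-1$. The only cosmetic difference is that the paper bounds $\rank(M^\top U)$ by $\rank(U)=k-1$ whereas you bound it by the row count of $U^\top M$, which is the same estimate.
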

\begin{proof}
Let $\phi(x)=M\in\R^{N\times m}$. By Theorem \ref{thm_1}, $\dim\ker(U^\top M) = 1$.
Since
  \begin{eqnarray*}
    k-1&=&\rank(U)
    \\&\ge& \rank (M^\top U)
    \\&=& \dim\span (M^\top U)
    \\&=& m-\dim\ker (U^\top M)
    \\&=&m-1,
  \end{eqnarray*}
we have $m\le k$.
By definition of piecewise representation, $m$ is the number of different values of $x$'s components. The proof is complete.
\end{proof}

Corollary \ref{cor} asserts that the $k$th $\ell_1$ basis vector $u_k$, as the global (hence local) minimum of problem $P_{[u_1,\dots,u_{k-1}]}$, is at most a $k$-valued signal.
In particular, $u_1$ is a constant signal and $u_2$ is exactly a two-valued signal.
Intuitively speaking, the larger $k$ is, the more values $u_k$ can take, the more oscillation $u_k$ might present.
Thus the $\ell_1$ basis vectors $\{u_k\}$ represent different oscillation frequencies from low to high as expected.

Another implication of condition (\ref{eq_nec}) is the finiteness of the set of local minima.
Denote by $\M_U^*$ the set of all partition matrices of $x\in\X_U$ satisfying condition (\ref{eq_nec}), i.e.,
\begin{equation}
    \M_U^*:=\{M\mid x\in\X_U,\ M=\phi(x),\ \dim\ker (U^\top M)=1\}.
\end{equation}
For any vector $x\in\R^N$, its partition matrix $M$ has at most $N$ columns, and each entry of $M$ is either $0$ or $1$. Therefore the set of all partition matrices of vectors in $\R^N$ is a finite set, so $\M^*_U$ as a subset is also finite.

By Theorem \ref{thm_1}, if $x$ is a local minimum of problem $P_U$, then its partition matrix belongs to $\M^*_U$.
Conversely, given a partition matrix $M\in\M^*_U$, we show that there are only two $x\in\X_U$ with partition matrix being equal to $M$.

\begin{thm}\label{thm_2}
If $M\in\M_U^*$ and $x, x'\in\X_U\cap\span M$, then $x=\pm x'$.
\end{thm}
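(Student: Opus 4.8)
The plan is to reduce everything to the one-dimensionality of $\ker(U^\top M)$ that is guaranteed by the hypothesis $M\in\M_U^*$. The key observation is that lying in $\span M$ lets us encode each feasible point by its coordinate vector relative to the columns of $M$, and the constraint $U^\top x=0$ then translates into the statement that this coordinate vector sits in $\ker(U^\top M)$. Once both points are seen to correspond to vectors in the same one-dimensional kernel, they must be proportional, and the sphere constraint pins down the proportionality factor.

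First I would fix the coordinate representations. Since the columns $\one_{A_1},\dots,\one_{A_m}$ of $M$ have pairwise disjoint supports, $M$ has full column rank, so every element of $\span M$ can be written as $Ma$ for a \emph{unique} $a\in\R^m$. Writing $x=Ma$ and $x'=Ma'$, the membership $x,x'\in\X_U$ gives $U^\top M a=U^\top x=0$ and likewise $U^\top M a'=U^\top x'=0$, so both $a$ and $a'$ lie in $\ker(U^\top M)$.

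Next I would invoke the dimension hypothesis. Because $\|x\|=\|x'\|=1$, neither $x$ nor $x'$ is zero, hence $a\ne 0$ and $a'\ne 0$. Thus $a$ and $a'$ are two nonzero vectors in the one-dimensional space $\ker(U^\top M)$, and therefore $a'=\lambda a$ for some scalar $\lambda\ne 0$. Applying $M$ yields $x'=Ma'=\lambda M a=\lambda x$. Finally I would determine the sign from the sphere constraint: $1=\|x'\|=|\lambda|\,\|x\|=|\lambda|$, so $\lambda=\pm 1$ and $x'=\pm x$.

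I do not expect a genuine obstacle here; the entire content of the argument is the translation of the two geometric constraints ($U^\top x=0$ and membership in $\span M$) into the single algebraic statement that $a$ and $a'$ span the same line, after which normalization forces the sign. The only point requiring a moment's care is the well-definedness of the coordinate vectors $a,a'$, i.e.\ that $M$ has full column rank; this is immediate from the disjointness of the supports of its columns $\one_{A_j}$.
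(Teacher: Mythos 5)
Your proof is correct and follows essentially the same route as the paper's: pass to coordinate vectors $a,a'$ with $x=Ma$, $x'=Ma'$, observe both lie in the one-dimensional $\ker(U^\top M)$, and conclude proportionality with $|\lambda|=1$ from the norm constraint. Your extra remark that $M$ has full column rank (from the disjoint supports of its columns) is a small but welcome justification that the paper leaves implicit.
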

\begin{proof}
    Since $x,x'\in\X_U\cap\span M$, there exist $a,a'$ such that $x=M a$ and $x'=M a'$.
    Then $U^\top Ma=U^\top x = 0$ and $U^\top Ma'=U^\top x' = 0$, i.e., $a, a'\in\ker(U^\top M)$.
    Since $\dim\ker(U^\top M)=1$ and $a,a'\ne 0$, there exists $t\in\R$ such that $a=t a'$, hence $x=t x'$.
    From $\|x\|=\|x'\|=1$, we have $t=\pm 1$.
    The proof is complete.
\end{proof}

Define
\begin{equation}
    \psi_U (M):= \{x\mid x\in\X_U\cap\span M\},\quad \forall M\in\M_U^*.
\end{equation}
By Theorem \ref{thm_2}, $\psi_U(M)$ has two elements in total, which differs by a sign.
Let
\begin{equation}
    \X^*_U := \bigcup_{M\in\M_U^*}\psi_U(M).
\end{equation}
Then $|\X^*_U|\le \sum_{M\in\M^*_U}|\psi_U(M)|=2|\M_U^*|<\infty$, i.e., $\X^*_U$ is a finite set.

The local minima set $\X^{**}_U$ is a subset of $X^*_U$.
In fact, If $x\in\X^{**}_U$ and $\phi(x)=M$, then $M\in\M_U^*$ and $x\in\X_U\cap\span M$, hence $x\in\psi_U(M)\subset \X^*_U$.
It follows that $\X^{**}_U$ is also a finite set, i.e., each local minima is isolated and the total number of local minima is finite.
Figure \ref{fig_T} shows the relations between these sets and definitions.
Here $\X^*_U$ resembles the concept of critical points, which contains but not equals the set of local minima.
\begin{figure}[!h]
    \centering
    \includegraphics[width=.55\textwidth]{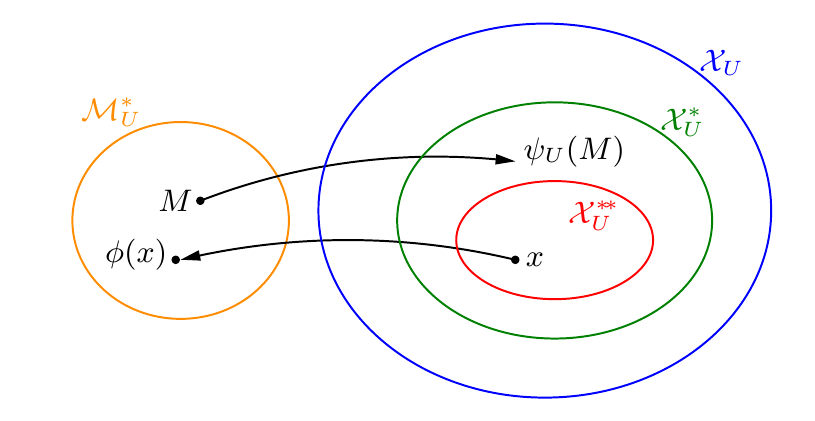}\\
    \caption{Relation between $\M^*_U$, $\X^{**}_U$, $\X^*_U$ and $\X_U$.}\label{fig_T}
\end{figure}

Since $\X^*_U$ is finite, to find the global minimum of problem $P_U$, one way is to compute $S(x)$ for all $x$ in $\X^*_{U}$ and pick out the largest one.
Table \ref{tab_1} shows a special case for $N=4$, $U=\frac{\one}{\sqrt{N}}$.
\begin{table}[!h]
    \begin{center}
    \begin{tabular}{p{3cm} p{3.5cm} p{4.5cm}}
        \hline
        $M\in\M_U^*$ & $\pm x\in\psi_U (M)$ & $S(x)$\\
        \hline
        $[\one_{\{1\}},\one_{\{2,3,4\}}]$ & $\pm\frac{1}{2\sqrt{3}}[-3,1,1,1]^\top$ & $\frac{2}{\sqrt{3}}(w_{12}+w_{13}+w_{14})$ \\   $[\one_{\{2\}},\one_{\{1,3,4\}}]$ & $\pm\frac{1}{2\sqrt{3}}[1,-3,1,1]^\top$ & $\frac{2}{\sqrt{3}}(w_{12}+w_{23}+w_{24})$ \\
        $[\one_{\{3\}},\one_{\{1,2,4\}}]$ & $\pm\frac{1}{2\sqrt{3}}[1,1,-3,1]^\top$ & $\frac{2}{\sqrt{3}}(w_{13}+w_{23}+w_{34})$ \\
        $[\one_{\{4\}},\one_{\{1,2,3\}}]$ & $\pm\frac{1}{2\sqrt{3}}[1,1,1,-3]^\top$ & $\frac{2}{\sqrt{3}}(w_{14}+w_{24}+w_{34})$ \\
        $[\one_{\{1,2\}},\one_{\{3,4\}}]$ & $\pm\frac{1}{2}[-1,-1,1,1]^\top$ & $w_{13}+w_{14}+w_{23}+w_{24}$ \\
        $[\one_{\{1,3\}},\one_{\{2,4\}}]$ & $\pm\frac{1}{2}[-1,1,-1,1]^\top$ & $w_{12}+w_{14}+w_{23}+w_{34}$ \\
        $[\one_{\{1,4\}},\one_{\{2,3\}}]$ & $\pm\frac{1}{2}[-1,1,1,-1]^\top$ & $w_{12}+w_{13}+w_{23}+w_{24}$ \\
        \hline
     \end{tabular}
     \caption{Enumeration of $x$ in $\X^*_U$ for $N=4$, $U=\frac{\one}{\sqrt{N}}$.} \label{tab_1}
     \end{center}
 \end{table}
Through this method of enumeration, the continuous problem $P_U$ is equivalent to a discrete problem in which the variable $x$ belongs to a finite set $\X^*_{U}$.
However, as far as we know, the discrete problem has no effective algorithm, since the size of $\X^*_U$ grows exponentially with $N$, and the method of enumeration is impractical for large $N$.
In the next section, we will give a fast greedy algorithm to approximately construct the $\ell_1$ Fourier basis when $N$ is large.

\section{Greedy algorithm for $\ell_1$ Fourier basis}

In this section, we provide a fast greedy algorithm to approximately construct the $\ell_1$ Fourier basis.
Through piecewise representation, the partition matrix of the $k$th $\ell_1$ basis vector $u_k$ naturally induces a partition of the vertices set $\V$.
The increasing of variation of $u_k$ implies that the corresponding partition evolves from coarser to finer scales.
On the contrary, given a sequence of partitions varying across different scales, one might be able to construct an orthonormal basis close to $\ell_1$ basis.
Motivated by this idea, we propose a greedy algorithm, based on a partition sequence $\tau_k$ created by iteratively grouping the vertices.
In each step, we pick out the two groups of vertices with the largest mutual weights between them, and combine them in a new group.
Repeating the process, we get a sequence of partitions $\tau_k$ varying from finer to coarser scales.
Then based on $\tau_k$, we define a sequence of subspaces $V_k$ of $\R^N$.
By using the similar ideas of multi-resolution analysis, we obtain an orthonormal basis.

\subsection{Greedy partition sequence}
We define a sequence of partitions $\tau_k$ on the vertices set $\V=\{1,\dots,N\}$ as follows.
\begin{defn}\label{def_tau}
    Let
    \begin{equation}
        \tau_N:= \{\{1\}, \{2\},\dots,\{N\}\}.
    \end{equation}
    For $k=N, N-1,\dots, 2$, define
    \begin{equation}
        A_k, B_k:=\mathop{\arg\,\max}\limits_{A,B\in\tau_{k}} W(A, B),
    \end{equation}
    \begin{equation}
        \tau_{k-1}:=\{A_k\cup B_k\}\cup\{C\in\tau_{k}\mid C\ne A_k,\ C\ne B_k\}.
    \end{equation}
\end{defn}

Definition \ref{def_tau} actually represents a vertices grouping process.
At the beginning, the finest partition $\tau_N$ has $N$ groups, each group having one vertex.
To get the next partition $\tau_{N-1}$, we identify $A_N,B_N$ as the two groups having the largest mutual weight.
Then we combine $A_N$ and $B_N$ to get a new group $A_N\cup B_N$, and together with the other groups in $\tau_N$ we form a new partition $\tau_{N-1}$.
This operation repeats for $N-1$ times.
At the end, we get the coarsest partition $\tau_1 = \{1,2,\dots, N\}$, with all the vertices belonging to a single group.
See Figure \ref{fig_d} for an illustration.
\begin{figure}[!h]
    \centering
    \includegraphics[width=.65\textwidth]{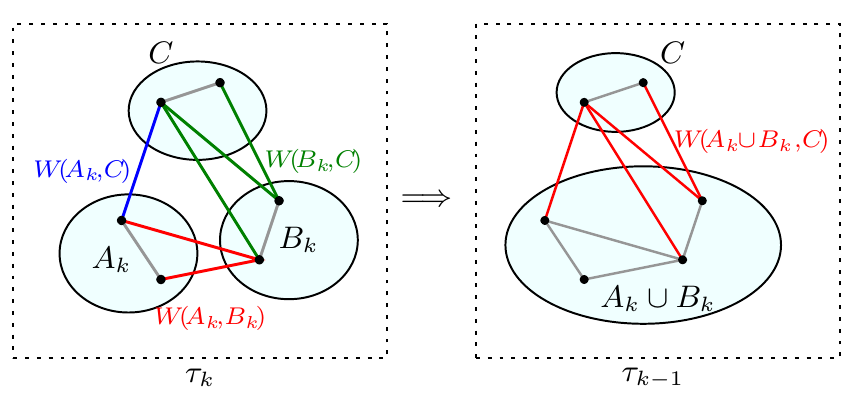}\\
    \caption{In step $k$, we combine $A_k$ and $B_k$ of $\tau_k$ to get $\tau_{k-1}$.}\label{fig_d}
\end{figure}

\subsection{Greedy basis}
The greedy partition sequence $\tau_k$ defined above yields a sequence of subspaces
\begin{equation}
    V_k:=\span \{\one_A\mid A\in\tau_k\},\quad k=1,\dots,N,
\end{equation}
which satisfy the relations
\begin{equation}
    \span \one = V_1 \subset V_2 \subset \cdots\subset V_N = \R^N.
\end{equation}

Denote the orthogonal complement of $V_{k-1}$ in $V_k$ by $V_k\ominus V_{k-1}$.
By definition \ref{def_tau}, the partition $\tau_{k-1}$ is obtained by combining two groups $A_k$ and$B_k$ in $\tau_{k}$.
Suppose $\tau_{k}=\{A_k,B_k, C_1,\dots, C_{k-2}\}$ and $\tau_{k-1} = \{A_k\cup B_k, C_1,\dots, C_{k-2}\}$.
Let $x\in V_k\ominus V_{k-1}$.
Then $x$ can be written in the form $a\one_{A_k}+b\one_{B_k}+\sum c_i \one_{C_i}$.
From $\langle x, \one_{C_i}\rangle =c_i|C_i|=0$, we get $c_i=0$, $\forall i=1,\dots,k-2$.
Since
\[
    \langle x,\one_{A_k\cup B_k}\rangle = a|A_k|+b|B_k| = 0,
\]
there exists $t\in\R$ such that $a=t|B_k|$, $b=-t|A_k|$.
By requiring $\|x\|=1$,  we get $t = \frac{\pm 1}{\sqrt{|A_k||B_k|(|A_k|+|B_k|)}}$.
We summarize these results in the following theorem.

\begin{thm}\label{thm_greedy}
    Suppose $A_k, B_k$ are defined as in Definition \ref{def_tau}. Let $\widetilde u_1 :=\tfrac{\one}{\sqrt{N}}$,
    \begin{equation}
        \widetilde u_k := a_k\one_{A_k}+b_k\one_{B_k}, \quad k=2,\dots, N,
    \end{equation}
    where
    \begin{equation}
        a_k := -t_k|B_k|,\quad b_k := t_k|A_k|,\quad t_k := \frac{1}{\sqrt{|A_k||B_k|(|A_k|+|B_k|)}}.
    \end{equation}
    Then $\widetilde U=[\widetilde u_1,\dots,\widetilde u_N]$ is an orthogonal matrix.
    We call $\widetilde U$ the greedy basis of the graph $G$.
\end{thm}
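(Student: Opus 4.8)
The plan is to prove that the $N$ columns $\widetilde u_1,\dots,\widetilde u_N$ form an orthonormal set; since they are $N$ vectors in $\R^N$, this is exactly the assertion that $\widetilde U$ is orthogonal, i.e. $\widetilde U^\top\widetilde U = I$. The whole argument rests on the nested flag $\span\one = V_1\subset V_2\subset\cdots\subset V_N=\R^N$ already recorded above, together with the observation that each $\widetilde u_k$ was constructed precisely to lie in the orthogonal complement $V_k\ominus V_{k-1}$.

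First I would check that each $\widetilde u_k$ is a unit vector. For $k=1$ this is immediate from $\|\one\|=\sqrt N$. For $k\ge 2$, since $A_k$ and $B_k$ are disjoint, a direct computation gives $\|\widetilde u_k\|^2 = a_k^2|A_k|+b_k^2|B_k| = t_k^2|A_k||B_k|(|A_k|+|B_k|)$, and substituting the defining value of $t_k$ yields $\|\widetilde u_k\|^2=1$.

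Next I would verify mutual orthogonality. Fix $j<k$. By construction $\widetilde u_j\in V_j$, and since the flag is increasing, $V_j\subseteq V_{k-1}$; on the other hand the derivation preceding the theorem shows $\widetilde u_k\in V_k\ominus V_{k-1}$, so $\widetilde u_k$ is orthogonal to every vector of $V_{k-1}$, in particular to $\widetilde u_j$. Thus $\langle\widetilde u_j,\widetilde u_k\rangle=0$ for all $j\ne k$, and combined with the unit-norm computation this gives orthonormality.

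The one point that genuinely uses the structure of the greedy partition sequence — and the step I would treat most carefully — is the membership $\widetilde u_k\in V_k\ominus V_{k-1}$. That $\widetilde u_k\in V_k$ is clear since $A_k,B_k\in\tau_k$. Orthogonality to $V_{k-1}$ requires testing against its spanning indicators $\one_{A_k\cup B_k},\one_{C_1},\dots,\one_{C_{k-2}}$: disjointness of $A_k,B_k$ from each $C_i$ kills the inner products with $\one_{C_i}$, while $\langle\widetilde u_k,\one_{A_k\cup B_k}\rangle = a_k|A_k|+b_k|B_k| = 0$ by the choice of $a_k,b_k$. To be sure that exactly one normalized direction is available at each rung, I would note that $\tau_k$ consists of $k$ pairwise disjoint nonempty blocks, so their indicators are linearly independent and $\dim V_k=k$; hence each $V_k\ominus V_{k-1}$ is one-dimensional and the mutually orthogonal pieces $V_1,\,V_2\ominus V_1,\dots,V_N\ominus V_{N-1}$ exhaust $\R^N$. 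No serious obstacle remains; the real content is the bookkeeping verifying that the greedy construction deposits exactly one unit vector in each rung of the flag.
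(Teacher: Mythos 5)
Your proposal is correct and follows essentially the same route as the paper, which establishes the theorem by the derivation immediately preceding it: identifying $V_k\ominus V_{k-1}$ as the span of $a\one_{A_k}+b\one_{B_k}$ with $a|A_k|+b|B_k|=0$, normalizing to obtain $t_k$, and relying on the nested flag $V_1\subset\cdots\subset V_N$ for mutual orthogonality. Your added bookkeeping (the dimension count $\dim V_k=k$ and the explicit pairing of $\widetilde u_j\in V_j\subseteq V_{k-1}$ against $\widetilde u_k\perp V_{k-1}$) only makes explicit what the paper leaves implicit.
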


Table \ref{tab_2} shows a simple example of the greedy basis $\widetilde U$ given a partition sequence $\tau_k$, where the number of vertices $N = 5$.
Figure \ref{fig_btree} plots the binary tree formed by $A_k$ and $B_k$.
\begin{table}[!h]
    \centering
    \begin{tabular}{p{.5cm}p{4.5cm}p{1.5cm}p{1.5cm}p{4.cm}}
    \hline
    $k$&$\tau_k$&$A_k$&$B_k$&$\widetilde u_k$
    \\\hline
    $5$&$\{\{1\},\{2\},\{3\},\{4\},\{5\}\}$&$\{1\}$&$\{3\}$&$\frac{1}{\sqrt{2}}[-1,0,1,0,0]^\top$
    \\$4$&$\{\{1,3\},\{2\},\{4\},\{5\}\}$&$\{2\}$&$\{5\}$&$\frac{1}{\sqrt{2}}[0,-1,0,0,1]^\top$
    \\$3$&$\{\{1,3\},\{2,5\},\{4\}\}$&$\{1,3\}$&$\{4\}$&$\frac{1}{\sqrt{6}}[-1,0,-1,2,0]^\top$
    \\$2$&$\{\{1,3,4\},\{2,5\}\}$&$\{1,3,4\}$&$\{2,5\}$&$\frac{1}{\sqrt{30}}[-2,3,-2,-2,3]^\top$
    \\$1$&$\{\{1,2,3,4,5\}\}$&&&$\frac{1}{\sqrt{5}}[1,1,1,1,1]^\top$
    \\\hline
    \end{tabular}
    \caption{An example of greedy Fourier basis.}\label{tab_2}
\end{table}
\begin{figure}[!h]
  \centering
  \includegraphics[width=.3\textwidth]{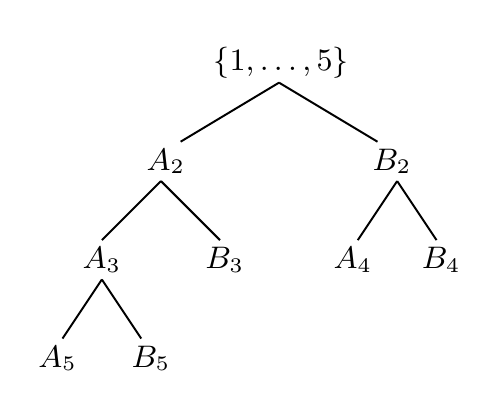}\\
  \caption{Binary tree of $A_k$ and $B_k$ in the above example}\label{fig_btree}
\end{figure}

An interesting question is whether the greedy basis vector $\widetilde u_k$ minimizes the $\ell_1$ norm variation.
We will show that the partition matrix induced by the greedy partition $\tau_k$ satisfies the necessary condition (\ref{eq_nec}).
\begin{thm}\label{thm_6}
    Let
    \begin{equation}
    \widetilde U_{k-1} := [\widetilde u_1, \dots,\widetilde u_{k-1}], \quad k=2, \dots, N,
    \end{equation}
    where $\widetilde u_k$ is defined in Theorem \ref{thm_greedy}.
    Suppose $\tau_k = \{A_k, B_k, C_1, \dots, C_{k-2}\}$, and $M = [\one_{A_k}, \one_{B_k}, \one_{C_1},\\ \dots, \one_{C_{k-2}}]$.
    Then
    $\dim\ker(\widetilde U_{k-1}^\top M) = 1$.
\end{thm}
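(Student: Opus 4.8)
The plan is to avoid manipulating the $(k-1)\times k$ matrix $\widetilde U_{k-1}^\top M$ directly, and instead recast the claim in terms of the nested subspaces $\span\one = V_1\subset V_2\subset\cdots\subset V_N=\R^N$ introduced before Theorem \ref{thm_greedy}. The target identity is that $M$ carries $\ker(\widetilde U_{k-1}^\top M)$ isomorphically onto the one-dimensional gap $V_k\ominus V_{k-1}$, from which $\dim\ker(\widetilde U_{k-1}^\top M)=1$ follows immediately.

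First I would establish that $\span(\widetilde U_{k-1}) = V_{k-1}$. Because each step of Definition \ref{def_tau} merges two blocks into one, $|\tau_j| = j$, so $\dim V_j = j$; in particular $\dim V_{k-1} = k-1$. By the computation already carried out before Theorem \ref{thm_greedy}, $\widetilde u_1\in V_1$ and, for $2\le j\le k-1$, the vector $\widetilde u_j = a_j\one_{A_j}+b_j\one_{B_j}$ lies in $V_j$ and is orthogonal to $\one_C$ for every $C\in\tau_{j-1}$, hence $\widetilde u_j\in V_j\ominus V_{j-1}\subseteq V_{k-1}$. These $k-1$ vectors are orthonormal by Theorem \ref{thm_greedy} and all lie in $V_{k-1}$, a space of dimension exactly $k-1$; an orthonormal set of the right cardinality is automatically a basis, so $\span(\widetilde U_{k-1}) = V_{k-1}$.

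Next I would observe that the columns $\one_{A_k},\one_{B_k},\one_{C_1},\dots,\one_{C_{k-2}}$ of $M$ are indicator vectors of the $k$ disjoint nonempty blocks of $\tau_k$, hence linearly independent; thus $\rank M = k$, the map $a\mapsto Ma$ is injective, and $\span M = V_k$. Now for $a\in\R^k$ the equation $\widetilde U_{k-1}^\top M a = 0$ says $Ma\perp\widetilde u_j$ for all $j\le k-1$, i.e.\ $Ma\perp\span(\widetilde U_{k-1}) = V_{k-1}$; combined with $Ma\in\span M = V_k$ this is exactly $Ma\in V_k\ominus V_{k-1}$. Therefore $\ker(\widetilde U_{k-1}^\top M)$ is the $M$-preimage of $V_k\ominus V_{k-1}$. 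Since $M$ is injective and $V_k\ominus V_{k-1}\subseteq V_k=\span M$ has dimension $\dim V_k-\dim V_{k-1}=1$, this preimage is a $1$-dimensional subspace of $\R^k$, which is the assertion.

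The disjoint-support independence of the columns of $M$ and the final dimension count are routine. The step that requires the most care is the first one, identifying $\span(\widetilde U_{k-1})$ with $V_{k-1}$: it quietly relies on the greedy construction exhausting each one-dimensional gap $V_j\ominus V_{j-1}$ scale by scale, so I would make the orthonormality (Theorem \ref{thm_greedy}) and the dimension equality $\dim V_{k-1}=k-1$ explicit rather than leaving the ``orthonormal set is a basis'' inference implicit. Once that subspace picture is secured, the only genuinely new idea is reading the kernel as a preimage under the injective $M$, and that is short.
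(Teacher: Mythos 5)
Your proposal is correct and follows essentially the same route as the paper's proof: both identify $\ker(\widetilde U_{k-1}^\top M)$ with the one-dimensional gap $V_k\ominus V_{k-1}$ via $\span\widetilde U_{k-1}=V_{k-1}$ and $\span M=V_k$. The only cosmetic difference is that you finish by injectivity of $M$ and a dimension count where the paper pins down the kernel explicitly as $\span\{[a_k,b_k,0,\dots,0]^\top\}$ using $\widetilde u_k$, and you make explicit the facts ($\span\widetilde U_{k-1}=V_{k-1}$, linear independence of the columns of $M$) that the paper uses implicitly.
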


\begin{proof}
    Suppose $y=[a, b, c_1, \dots, c_{k-2}]^\top\in\ker(\widetilde U_{k-1}^\top M)$ and $x = M y$.
    Then $\widetilde U_{k-1}^\top x =\widetilde U_{k-1}^\top M y = 0$, i.e., $x \bot \span \widetilde U_{k-1}$.
    Since $\span \widetilde U_{k-1} = \span V_{k-1}$, we have $x \bot V_{k-1}$.
    Because $x=My\in \span\{\one_A\mid A\in\tau_k\}=V_k$,  that means $x\in V_k\ominus V_{k-1}$.
    Since $\dim(V_k\ominus V_{k-1})=1$ and $\widetilde u_k\in V_k\ominus V_{k-1}$, there exists $t\in\R$ such that $x=t \widetilde u_k$, i.e., $a\one_{A_k}+b\one_{B_k}+\sum c_i\one_{C_i} = t a_k\one_{A_k}+t b_k\one_{B_k}$.
    Hence $a=ta_k$, $b=tb_k$, $c_i=0$, i.e., $y = t[a_k, b_k, 0, \dots, 0]^\top$,
    therefore $\ker (\widetilde U_{k-1}^\top M) = \span\{[a_k, b_k, 0, \dots, 0]^\top\}$ and $\dim\ker (\widetilde U_{k-1}^\top M) = 1$.
\end{proof}

In Theorem \ref{thm_6}, $M$ and $\widetilde U_{k-1}$ satisfy the condition (\ref{eq_nec}), i.e., $M\in\M_{\widetilde U_{k-1}}^*$.
Since $\widetilde u_k \in\X_{\widetilde U_{k-1}} \cap \span M$, we have $\widetilde u_k\in\X^*_{\widetilde U_{k-1}}$, i.e. $\widetilde u_k$ can be seen as a `critical point' of problem $P_{\widetilde U_{k-1}}$, but not necessarily a local minimum.
Despite of this, the greedy basis $\widetilde U$ provides a rather good approximation to the $\ell_1$ basis, as demonstrated in the numerical experiments later.

\subsection{Fourier transform under the greedy basis}
Let us consider the computation of the Fourier coefficients of a signal $x$ under the greedy basis $\widetilde U$:
\begin{equation}
    \widetilde{\hat x}(k) := \langle x,\widetilde u_k\rangle = \langle x, a_k\one_{A_k} + b_k\one_{B_k}\rangle=a_k \alpha_k + b_k \beta_k
\end{equation}
where
\begin{equation}
    \alpha_k:=\langle x,\one_{A_k}\rangle,\quad \beta_k:=\langle x,\one_{B_k}\rangle
\end{equation}
From Definition \ref{def_tau}, the set of $A_k$'s and $B_k$'s form a binary tree.
Suppose $A_j$ is the parent node of $A_k$ and $B_k$, i.e., $A_j=A_k\cup B_k$, then we have
\begin{equation}
    \alpha_j = \langle x,\one_{A_j}\rangle = \langle x,\one_{A_k}\rangle + \langle x,\one_{B_k}\rangle = \alpha_k + \beta_k.
\end{equation}
Thus the $\alpha_j$'s and $\beta_j$'s also form a binary tree, and can be computed from bottom to up based on the tree structure.
Indeed, the computation of $\widetilde{\hat x}$ needs $O(N)$ multiplications, while the Laplacian basis transform $\hat x$ needs $O(N^2)$ multiplications, since each inner product $\hat x(k)=\langle x,u_k\rangle$ takes $O(N)$ multiplications.
So greedy basis transform is much faster than the Laplacian basis transform.

\section{Numerical Experiments}

\subsection{Error between the greedy basis and $\ell_1$ basis}
In our first experiment, we aim to examine the difference between the greedy basis $\widetilde U$ and the $\ell_1$ basis $U$.
When the vertices number $N$ is small, one can enumerate the finite set $\X^*_U$ to find the global minimum of the $\ell_1$ norm variation.
When $N$ is large, to our knowledge, there is no effective algorithm to obtain the global minimum.
Therefore we restrict $N\le 8$ here so that the accurate $\ell_1$ basis can be obtained by enumeration.

Since $\widetilde u_1$ and $u_1$ are equal, we begin from $u_2$ and $\widetilde u_2$.
Denote the relative error of their variations by
\[
    r(\widetilde u_2,u_2):=\frac{S(\widetilde u_2) - S(u_2)}{S(u_2)}.
\]
In Figure \ref{fig_ru2}(a) the red line plots the average of $r(\widetilde u_2,u_2)$ for $100$ random graphs.
Each of these graphs is generated by $N$ random points $p_i\in\R^2$, and the weights are defined by $w_{ij} := \exp(\|p_i-p_j\|^2/\sigma^2)$ for some parameter $\sigma$.
For the sake of completeness, we also plot the relative error $r(u'_2, u_2)$ in the blue line, where $u_2'$ is the second Laplacian basis vector.
It can be seen that the error $r(\widetilde u_2,u_2)$ is close to zero.
\begin{figure}[!h]
    \centering
    \subfloat[]{\includegraphics[width=.5\textwidth]{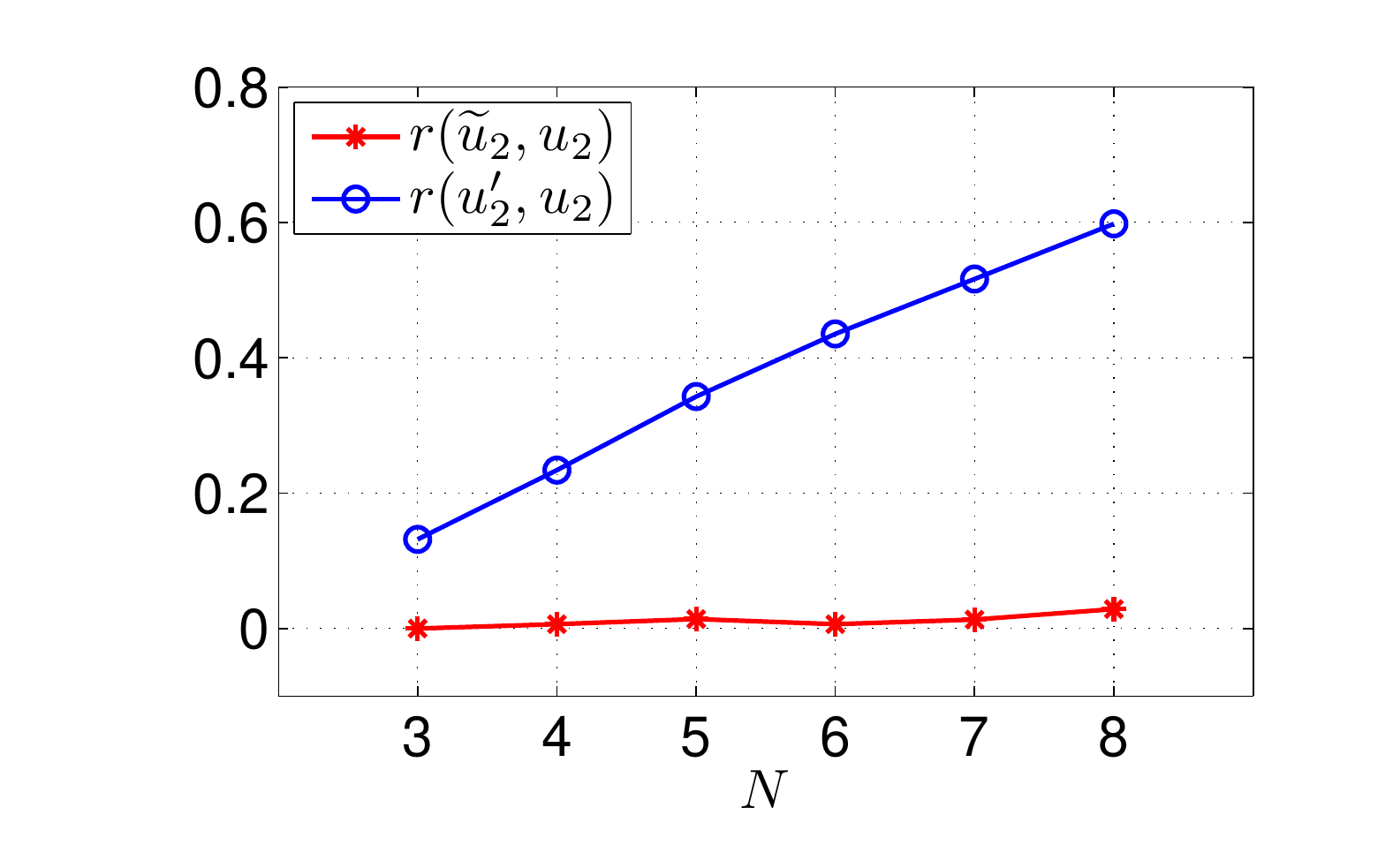}}
    \subfloat[]{\includegraphics[width=.5\textwidth]{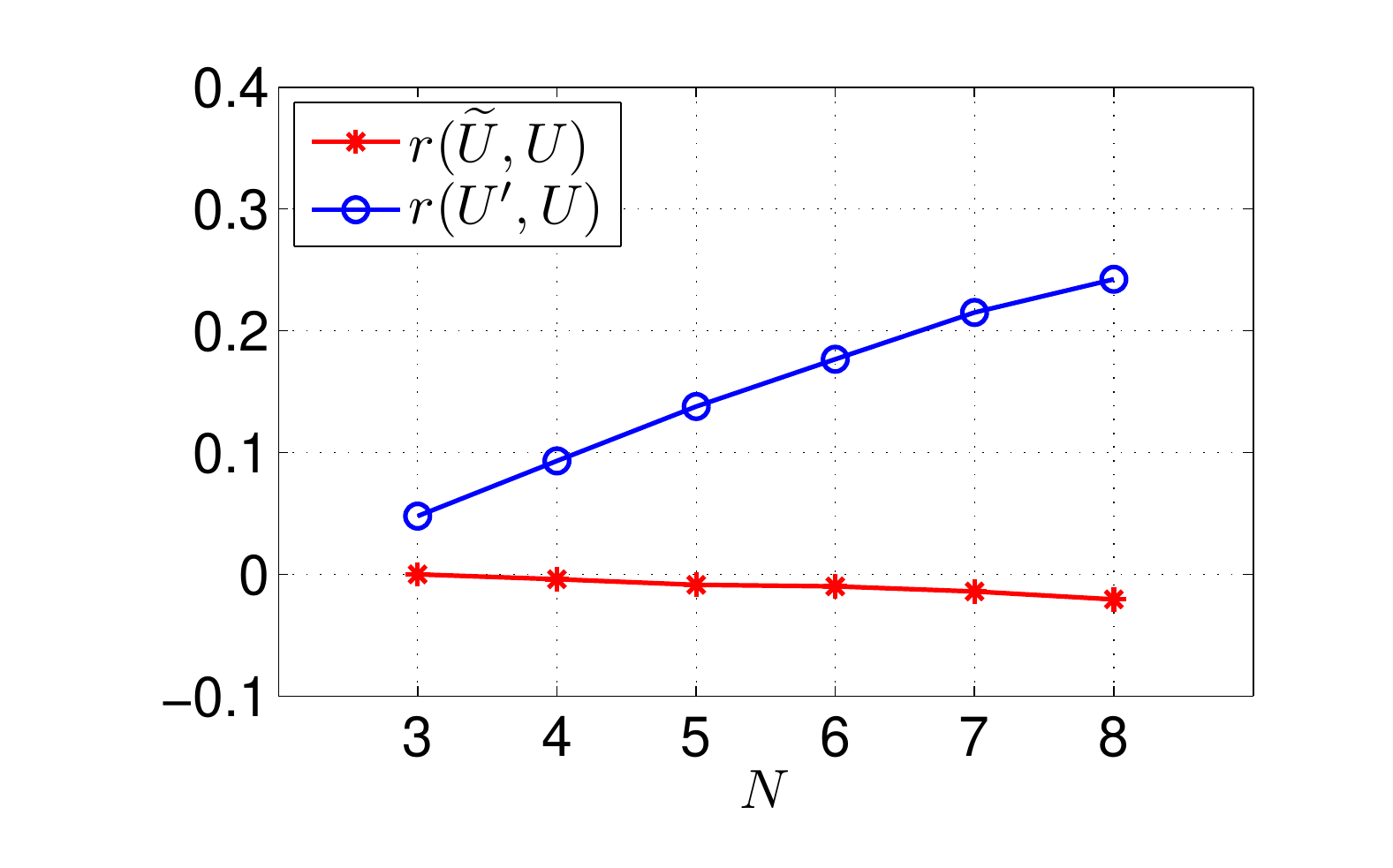}}
    \caption{Comparison of variation between different bases. }\label{fig_ru2}
\end{figure}

We also compare the sum of variations of the two bases.
Denote
\[
    S(U):=\sum_{k=1}^N S(u_k), \quad S(\widetilde U):=\sum_{k=1}^N S(\widetilde u_k)
\]
and relative error
\[
    r(\widetilde U,U):=\frac{S(\widetilde U) - S(U)}{S(U)}.
\]
The average of $r(\widetilde U, U)$ for $100$ random graphs is plotted in Figure \ref{fig_ru2}(b) by the red line.
The relative error $r(U',U)$ between $S(U')$ and $S(U)$, where $U'$ is the Laplacian basis, is also plotted for the sake of completeness (blue line in Figure \ref{fig_ru2}(b)).
It can be seen that $r(\widetilde U,U)$ is below zero, i.e., the sum of variation of $\widetilde U$ is even smaller than that of $U$.
That means, if one considers the problem of minimizing the sum of variation of the whole basis, i.e.
\[
    \begin{array}{cl}
    \min\limits_{U\in\R^{N\times N}}& S(U)\\
    \text{s. t.}& U^\top U = I,
    \end{array}
\]
then the greedy basis $\tilde U$ might give a better approximate solution than the $\ell_1$ basis.

\subsection{$n$-term approximation}
A nice property of the classical Fourier transform is that the Fourier coefficient usually has a fast decay for most real-world signals.
That means one can drop the high frequency coefficients without losing much information, which serves as the foundation of various signal compression methods.
In our second experiment, we will examine this property for the greedy basis $\widetilde U$, and compare it to the Laplacian basis $U'$.

Given a signal $x$, let the Fourier transform under the Laplacian basis be denoted by $\hat x' = U'^\top x$, and the Fourier transform under the greedy basis be denoted by $\widetilde{\hat x} = \widetilde U^\top x$.
Suppose we use the largest $n$ terms of coefficients to reconstruct $x$.
Namely we sort the coefficients in descending order, say $|\widetilde{\hat x}(k_1)|\ge \cdots \ge |\widetilde{\hat x}(k_N)|$, for the greedy basis.
Then we define the $n$-term approximation
\[
    \widetilde y_n := \sum_{i=1}^n \widetilde{\hat x}(k_i) \widetilde u_{k_i}
\]
and the approximation error
\[\widetilde\varepsilon_n :=  \frac{\|x- \widetilde y_n\|} {\|x\|} = \frac{\big(\sum_{i = n+1}^N|\widetilde{\hat x}(k_i)|^2\big)^{1/2}}{\|\hat x\|}.\]
For the Laplacian basis, we define the $n$-term approximation $y_n'$ and error $\varepsilon_n'$ in a similar way.

The experiment is performed on two signals.
The first is a simulated signal, defined through its Fourier coefficients under the Laplacian basis:
\[
    \hat x'(k) := \frac{1}{1+\mu \lambda_k} \times {\rm rand}(k),
\]
where $\mu$ is a constant, $\lambda_k$ is the Laplacian eigenvalue, and ${\rm rand}(k)$ is a random number uniformly distributed on $[-1,1]$.
Figure \ref{fig_approx} (top) plots the simulated signal (left), its Fourier coefficients under the two bases (middle) and the corresponding approximation errors (right).

The second example is a real-world signal: the average temperature of Switzerland during 1981-2010 \cite{swiss}.
See Figure \ref{fig_approx} (bottom) for the results.
It can be seen that for either simulated or real-world signal, both types of Fourier transform lead to a fast decay of approximation error, and the rates of decay are almost the same.
\begin{figure}[h]
\centering
  \includegraphics[width=\textwidth]{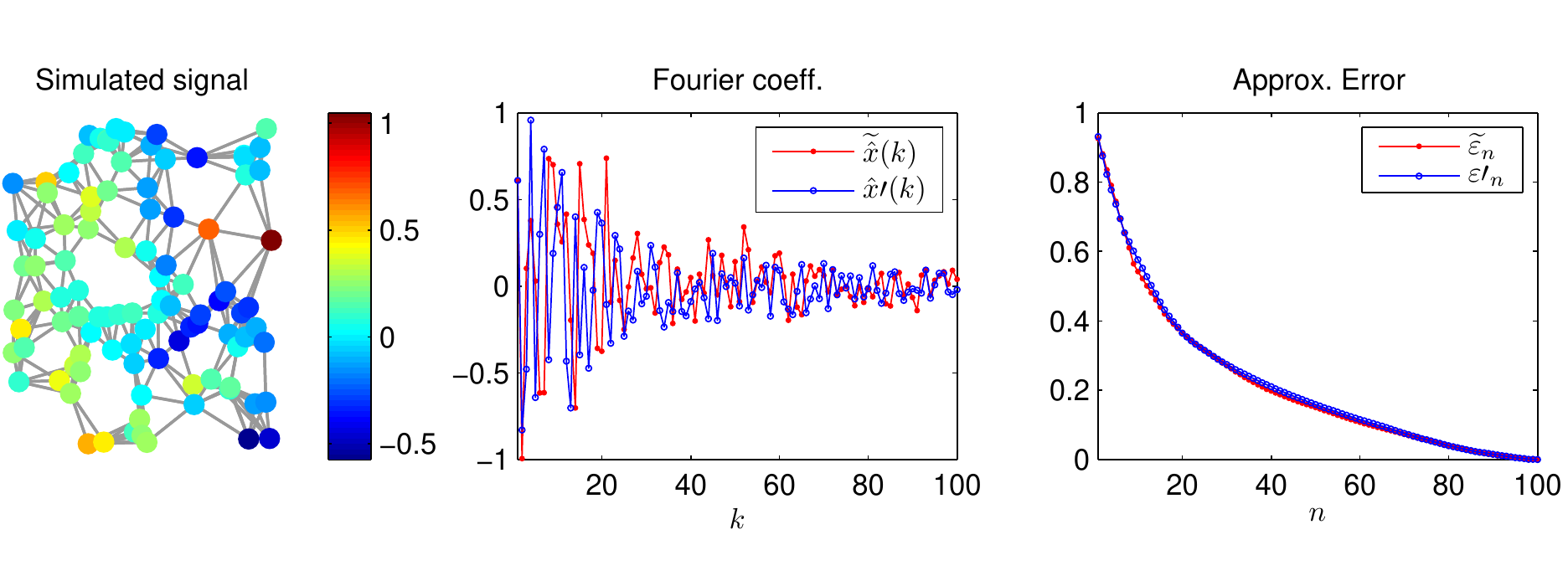}
  \includegraphics[width=\textwidth]{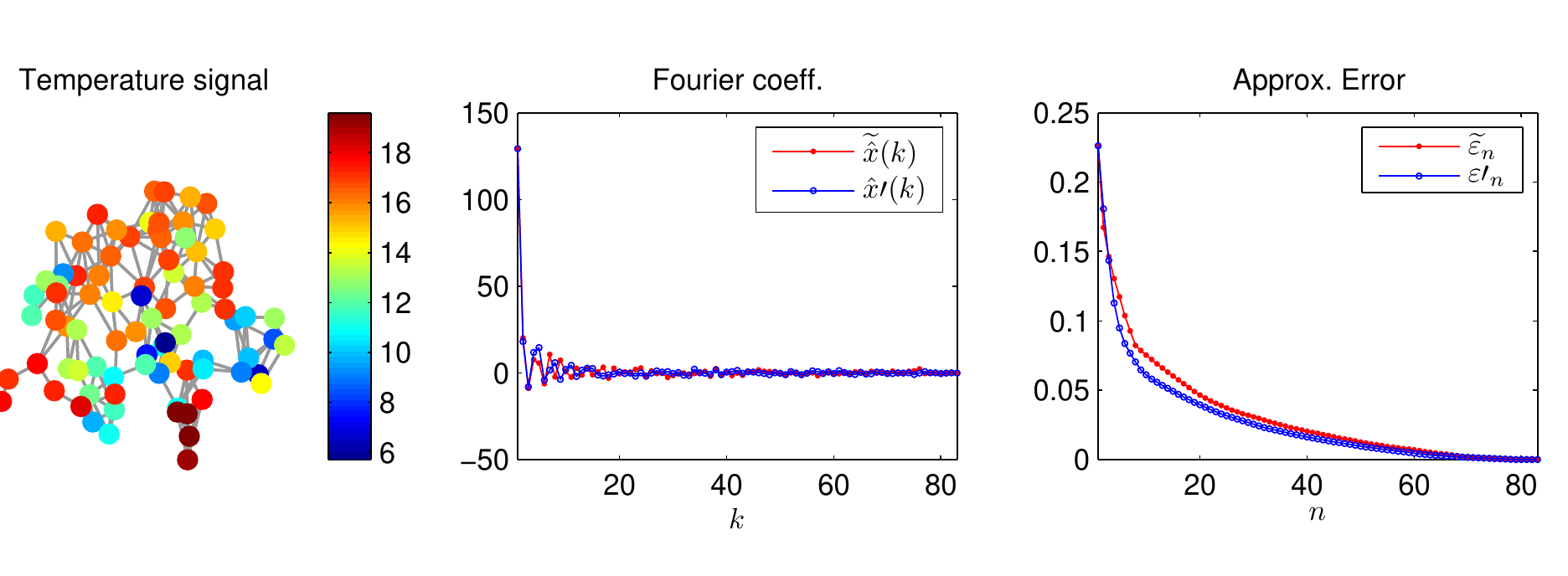}
  \caption{$n$-term approximation under two bases. Red: Greedy basis. Blue: Laplacian basis. }\label{fig_approx}
\end{figure}

\section{Conclusion}
In this paper we propose a definition of $\ell_1$ Fourier basis of a graph as the solutions of a sequence of $\ell_1$ norm variation minimization problems.
We obtains a necessary condition satisfied by the local minimum, which implies the number of values of $u_k$ is at most $k$.
Furthermore, we show that there are finitely many isolated local minima, contained in a finite set $\X^*_U$, and it is possible to enumerate $\X^*_U$ to find the global minimum when $N$ is small.
For large $N$, we give a fast greedy algorithm to approximately construct the $\ell_1$ basis, based on a greedy partition sequence created by grouping the vertices according to their mutual weights.
Numerical experiments show that the greedy basis provides a good approximation to the $\ell_1$ basis.
Also, the Fourier transforms of the two bases (greedy basis and Laplacian basis) have the same rate of decay for simulated or real signals.
As for future directions, we suggest considering the general $\ell_p$ norm variation minimization problem and the corresponding $\ell_p$ norm Fourier basis.

\section*{Acknowledgements}
This work is supported by National Natural Science Foundation of China (Nos. 11601532, 11771458, 11501377, 11431015).

\section*{References}

\end{document}